\newtheorem{theorem}{Theorem}
\newtheorem{lemma}[theorem]{Lemma}
\newcommand{\bR}{\mathbb{R}}
\newcommand{\be}{\mathbf{e}}
\newcommand{\bs}{\mathbf{s}}
\newcommand{\bu}{\mathbf{u}}
\newcommand{\bv}{\mathbf{v}}
\newcommand{\bo}{\mathbf{1}}
\newcommand{\br}{\mathbf{r}}
\newcommand{\bx}{\mathbf{x}}
\newcommand{\by}{\mathbf{y}}
\newcommand{\bnu}{{\bm \nu}}
\newcommand{\bh}{\mathbf{\hat{x}}}
\newcommand{\mmu}{\overline{\mathbf{u}}}
\newcommand{\mv}{\overline{\mathbf{v}}}
\newcommand{\mr}{\overline{\mathbf{r}}}
\newcommand{\mx}{\overline{\mathbf{x}}}
\newcommand{\mh}{\overline{\mathbf{\hat{x}}}}
\newcommand{\sh}{\sigma_{\bh}}
\newcommand{\su}{\sigma_{\bu}}
\newcommand{\sv}{\sigma_{\bv}}
\newcommand{\sr}{\sigma_{\br}}
\newcommand{\sx}{\sigma_{\bx}}
\begin{document}
\title{Maximum likelihood decoding for multilevel channels with gain and offset mismatch}
\author{Simon R.\ Blackburn\\
Department of Mathematics\\
Royal Holloway University of London\\
Egham, Surrey TW20 0EX\\
United Kingdom}
\maketitle

\begin{abstract}
K.A.S. Immink and J.H. Weber recently defined and studied a channel with both gain and offset mismatch, modelling the behaviour of charge-leakage in flash memory. They proposed a decoding measure for this channel based on minimising Pearson distance (a notion from cluster analysis). The paper derives a formula for maximum likelihood decoding for this channel, and also defines and justifies a notion of minimum distance of a code in this context.
\end{abstract}

\section{Introduction}

We begin by defining some notation. Let $n$ be an integer, $n\geq 3$. All our vectors will have length $n$, and will have entries in the real numbers $\mathbb{R}$. For a vector $\bx$, we write $x_i$ for the $i$th entry of $\bx$, we write
\[
\mx=\frac{1}{n}\sum_{i=1}^n x_i
\]
for the mean of $\bx$ and we write
\[
\sx=\sqrt{\sum_{i=1}^n(x_i-\mx)^2}
\]
for the (unnormalised) standard deviation of $\bx$. We write $\bo$ for the all-one vector of length $n$, and call any scalar multiple of $\bo$ a constant vector. For vectors $\bu$ and $\bv$ that are not constant vectors, the {\em Pearson correlation coefficient} $\rho_{\bu,\bv}$ is defined by
\[
\rho_{\bu,\bv}=\frac{\sum_{i=1}^n (u_i-\mmu)(v_i-\mv)}{\su\sv}.
\]
Finally, the \emph{Pearson distance} $\delta_\mathrm{Pearson}(\bu,\bv)$ between vectors $\bu$ and $\bv$ is defined to be
\[
\delta_\mathrm{Pearson}(\bu,\bv)=1-\rho_{\bu,\bv}.
\]
Since $\rho_{\bu,\bv}$ lies between $-1$ and $1$, the Pearson distance lies between $0$ and~$2$. Both Pearson distance and Pearson correlation are well-known concepts in the area of cluster analysis.

The channel considered by Kees A. Schouhamer Immink and Jos H. Weber~\cite{ImminkWeber14} is defined as follows. If the vector $\bx$ is sent through the channel, the channel outputs the received vector $\br$ where
\[
\br=a(\bx+\bnu)+b\bo.
\]
Here $a$ (the \emph{gain}) and $b$ (the \emph{offset}) are unknown real numbers, with $a>0$, and
\[
\bnu=(\nu_1,\nu_2,\ldots,\nu_n)
\]
where the $\nu_i$ are independently normally distributed with mean $0$ and standard deviation $\sigma$.

The channel is motivated by the properties of flash memory. We give some basic details of this setting here; see~\cite{BezCamerlenghi03,SalaImminkDolecek15} for more detailed introductions, and see (for example)~\cite{JiangMateescu09,JiangSchwartz08,JiangSchwartz10} for another approach to modelling the problem using rank modulation codes. Flash memory is made up of an array of floating-gate transistors, known as flash cells. Data is stored in each cell by varying the charge (equivalently, the voltage) on the cell. In single level cell (SLC) flash memory, each cell stores one bit of information depending on whether the voltage level is zero or non-zero. In more recent multi-level cell (MLC) systems, more information is stored by allowing the cell to be charged at one of several discrete non-zero voltage levels. The vector $\bx$ corresponds to the voltages we wish to store in a block of $n$ cells, so $x_i$ is the voltage we wish to store in the $i$th cell. We cannot hope to initialise a cell with the exact voltage we wish: the errors in this process give rise to the error term $\bnu$. Over time, the voltage in each cell drops due to charge leakage. We assume that the function that gives this voltage change is unknown, but is affine and is independent of which cell in the block we are examining. The unknown coefficients $a$ and $b$ specify this function; the coefficient $a$ is positive since charge leakage is monotonic increasing: the more charge we have initially, the more we have after leakage. The received vector $\br$ thus models the set of voltages we retrieve from a block of cells we have initialised with voltages corresponding to $\bx$.

We note that the channel does not model some aspects of flash memory: intercell coupling (where the charge on one cell influences the charge on neighbouring cells) is not modelled in any way; nor is the possibility that the magnitude of the error in the charging process depends on the charge in some way. Nevertheless, the channel is very natural and captures key properties of the process of retrieving data from flash memory. 

Immink and Weber assume that the vectors $\bx$ lie in some finite subset $C$ of $\mathbb{R}^n$. (In fact, they assume that $C\subseteq\{0,1,\ldots, q-1\}^n$ for some fixed integer $q$.) This corresponds to the fact that we initialise each cell with one of a finite discrete set of voltages. To ensure unique decoding in the absence of noise, they assume that if $\bx\in C$ then no other codeword $\by\in C$ has the form $\by=a\bx+b\bo$ for real numbers $a$ and $b$ with $a$ positive. They also assume that no constant vector lies in $C$. This makes the Pearson distance between any pair of vectors in $C$ well-defined; see Section~\ref{sec:comments} for additional motivation for this assumption. Weber, Immink and Blackburn~\cite{ImminkWeberPearson} have studied maximal codes $C\subseteq\{0,1,\ldots,q-1\}^n$ with these properties.

A decoder based on Pearson distance is proposed in this setting in~\cite{ImminkWeber14}. So we decode a received vector $\br$  as $\bh$, where $\bh\in C$ minimises $\delta_\mathrm{Pearson}(\br,\bh)$. One motivation for this choice is that Pearson distance behaves well with respect to an affine charge-leakage function, since
\[
\delta_\mathrm{Pearson}(\br,\bh)=\delta_\mathrm{Pearson}(a\br+b\bo,a\bh+b\bo).
\]
Pearson distance has a natural geometric meaning: see Section~\ref{sec:comments} for a brief discussion.

In this paper, we derive a maximum likelihood decoding function for the channel in~\cite{ImminkWeber14}, and compare a decoder based on this function with a decoder based on minimising Pearson distance. We also propose and justify a notion of minimum distance for codes used with this channel. 

We should emphasise that the model makes no assumptions on the distribution of the unknown (`nuisance') parameters $a$ and $b$: if we know something about these distributions, other decoding methods might be appropriate. For example, if $a$ is known to be very close to $1$, then decoding based on minimising Euclidean distance is sensible; Immink and Weber~\cite{ImminkWeber15} have proposed a decoder based on minimising a weighted sum of Euclidean and Pearson distances in some situations.

The remainder of the paper is structured as follows. Section~\ref{sec:preliminaries} sets up notation, and contains some preliminary lemmas. In Section~\ref{sec:MLD} we show how to achieve Maximum Likelihood Decoding for this channel. Pearson distance is not the measure to use for Maximum Likelihood decoding, but is often a good approximation to it: simulations show comparable performance between both MLD and Pearson decoders. Section~\ref{sec:distance} defines and justifies a minimum distance measure for codes designed for the channel. In Section~\ref{sec:comparison}, we give some results of simulations that compare the approach in~\cite{ImminkWeber14} with the one taken here. Finally, Section~\ref{sec:comments} provides some comments on various aspects of the model in~\cite{ImminkWeber14}.

\section{Preliminaries}
\label{sec:preliminaries}

This section contains notation that will be used in the remainder of this paper. Some simple facts, which will often be used without further comment, are also stated.

We define $||\bu||$ to be the Euclidean length of $\bu\in\bR^n$, and we define $\delta(\bu,\bv)$ to be the Euclidean distance between $\bu,\bv\in\bR^n$.

Define the subspace $Z$ of $\bR^n$ by
\begin{align*}
Z&=\{\bx\in\bR^n:\mx=0\}\\
&=\left\{(x_1,x_2,\ldots,x_n)\in\bR^n:\sum_{i=1}^nx_i=0\right\}.
\end{align*}
Let $\zeta:\bR^n\rightarrow Z$ be defined by
\[
\zeta(\bx)=\bx-\mx\bo.
\]
We can think of $\zeta$ as a `normalisation', applying an offset to a vector so that it has mean zero. Using $\zeta$ allows the formulas given in the introduction to be expressed in a more geometric way. We now give more details. We see that
\begin{equation}
\label{eqn:sigma_geometric}
\sigma_\bu=||\zeta(\bu)||.
\end{equation}
We write $\langle \bx,\by\rangle$ for the standard inner product (the dot product) of $\bx$ and $\by$. So 
\[
\langle \bx,\by\rangle=\sum_{i=1}^nx_iy_i.
\]
Since $\langle \bx,\by\rangle=||\bx||\,||\by||\cos \theta$ where $\theta$ is the angle between $\bx$ and $\by$, we see that
\begin{align}
\label{eqn:rho_inner_product}
\rho_{\bu,\bv}&=\frac{\langle \zeta(\bu),\zeta(\bv)\rangle}{\sigma_{\zeta(\bu)}\sigma_{\zeta(\bv)}}\\
&=\frac{||\zeta(\bu)||\,||\zeta(\bv)||\cos \theta}{||\zeta(\bu)||\,||\zeta(\bv)||}\nonumber\\
\label{eqn:rho_geometric}
&=\cos\theta,
\end{align}
where $\theta$ is the angle between $\zeta(\bu)$ and $\zeta(\bv)$.

Finally, we note that $\zeta(\zeta(\bu))=\zeta(\bu)$, that $\zeta(\bu+\bv)=\zeta(\bu)+\zeta(\bv)$ and that $\sigma_{\zeta(\bu)}=\sigma_{\bu}$.

\section{Maximum likelihood decoding}
\label{sec:MLD}

This section provides a proof of the following theorem:

\begin{theorem}
\label{thm:MLD}
A maximum likelihood decoder decodes a received vector $\br$ to the codeword $\bh$ which minimises $\ell_\br(\bh)$, where
\begin{equation}
\label{eqn:MLD}
\ell_{\br}(\bh)=\begin{cases}
\sigma_\bh^2(1-\rho^2_{\br,\bh})&\text{ when }\rho_{\br,\bh}> 0,\\
\sigma_\bh^2&\text{ otherwise.}
\end{cases}
\end{equation}
\end{theorem}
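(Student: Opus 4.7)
My approach is to form the conditional density $p(\br\mid\bh,a,b)$, maximise over the nuisance parameters $(a,b)$ for each codeword $\bh$, and check that the resulting profile likelihood is a monotone decreasing function of $\ell_\br(\bh)$; the MLD rule is then to pick the codeword minimising $\ell_\br$.

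From $\br=a(\bh+\bnu)+b\bo$ with $\bnu\sim N(\mathbf{0},\sigma^2I)$, a change-of-variables gives the Gaussian density proportional to $a^{-n}\exp(-\|\br-a\bh-b\bo\|^2/(2a^2\sigma^2))$. For fixed $(\bh,a)$ the exponent is minimised at $b=\mr-a\mh$, which replaces $\br-a\bh-b\bo$ by $\zeta(\br)-a\zeta(\bh)\in Z$. Substituting $w=1/a>0$,
\[
\frac{\|\zeta(\br)-a\zeta(\bh)\|^2}{a^2}=\|w\zeta(\br)-\zeta(\bh)\|^2=w^2\sr^2-2w\sr\sh\rho_{\br,\bh}+\sh^2,
\]
reducing the problem to a one-dimensional optimisation over $w>0$. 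This quadratic has unconstrained minimiser $w^\ast=\sh\rho_{\br,\bh}/\sr$; if $\rho_{\br,\bh}>0$ then $w^\ast>0$ is admissible and the minimum value is $\sh^2(1-\rho_{\br,\bh}^2)$, otherwise the quadratic is non-decreasing on $(0,\infty)$ and its infimum equals $\sh^2$, attained as $w\to 0^+$ (equivalently $a\to\infty$). These are exactly the two cases of $\ell_\br(\bh)$ in~\eqref{eqn:MLD}.

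The step needing most care is the $a^{-n}$ prefactor of the Gaussian, which becomes $w^n$ after the substitution and contributes an additional $n\log w$ term that competes with the quadratic; a naive joint maximisation with this term would give an answer depending on $\sigma$. The right way to handle it is to check that the relevant comparison across codewords is governed by the quadratic minimum alone---for instance by observing that the noise vector $\bnu$ has a parameter-free Gaussian density, so that maximum likelihood amounts to finding, for each $\bh$, the smallest $\|\bnu\|^2$ consistent with the observation equation $\br=a(\bh+\bnu)+b\bo$ for some admissible $(a,b)$. Elimination of $b$ sets $\overline{\bnu}=0$ and the substitution $w=1/a$ identifies this smallest value with the minimum of the quadratic above. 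Once this bookkeeping is settled, the theorem follows from the elementary one-dimensional calculus already described.
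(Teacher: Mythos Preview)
Your computation is correct and reaches the same destination as the paper, but the route differs in two respects worth noting.

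First, the paper avoids your detour through the Jacobian entirely. It \emph{defines} the likelihood as $L_{a,b}(\bh\mid\br)=f((\br-b\bo)/a-\bh)$, i.e.\ the noise density evaluated at the implied $\bnu$, and never writes down $p(\br\mid\bh,a,b)$ with its $a^{-n}$ prefactor. Your final paragraph effectively rediscovers this definition (``maximum likelihood amounts to finding, for each $\bh$, the smallest $\|\bnu\|^2$ consistent with the observation equation''), but you present it as a fix to a problem rather than the starting point. The paper's choice is a modelling decision about how to profile out the nuisance parameters $(a,b)$; once you adopt it, there is nothing to repair. Your phrasing ``the right way to handle it'' is slightly misleading: it is not that the $a^{-n}$ term can be argued away, it is that the paper's likelihood never contained it.

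Second, the paper packages the optimisation geometrically rather than algebraically. After the substitution $a'=1/a$, $b'=-b/a$, it recognises $\min_{a'>0,\,b'}\|a'\br+b'\bo-\bh\|^2$ as the squared Euclidean distance from $\bh$ to the half-plane $U_\br^+=\{a'\br+b'\bo:a'>0\}$, and a separate lemma shows this equals the squared distance from $\zeta(\bh)$ to the ray $R_{\zeta(\br)}$, which is then computed as $\ell_\br(\bh)$ via $\|\zeta(\bh)\|^2\sin^2\theta$ or $\|\zeta(\bh)\|^2$ according to whether the foot of the perpendicular lands on the ray. Your direct minimisation of the quadratic $w^2\sr^2-2w\sr\sh\rho_{\br,\bh}+\sh^2$ over $w>0$ is exactly the same calculation stripped of its geometric interpretation; it is shorter, but the paper's version has the advantage of feeding directly into the minimum-distance analysis later in the paper, where the ray picture is reused.
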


Before proving this theorem, we provide a geometrical interpretation for the formula~\eqref{eqn:MLD}. For a non-zero vector $\br\in\bR^n$, define
\begin{align*}
U_\br&=\{a'\br+b'\bo\mid a',b'\in\bR\},\text{ and}\\
U^+_\br&=\{a'\br+b'\bo\mid a',b'\in\bR, a'>0\}.
\end{align*}
So $U_\br$ is a subspace, and $U^+_{\br}$ is a half-subspace, of $\bR^n$. For a vector $\br\in\bR^n$ we write $R_\br$ for the ray from the origin in the direction of $\br$, so
\[
R_\br=\{a'\br:a'\in\bR,a'>0\}.
\]

\begin{lemma}
\label{lem:MLD}
Let $\br$ and $\bh$ be vectors in $\bR^n$. Let $d_1$ be the 
Euclidean distance between $\bh$ and $U^+_\br$. Let $d_2$ be
the Euclidean distance between $\zeta(\bh)$ and $R_{\zeta(\br)}$. Then
\[
d_1^2=d_2^2=\ell_{\br}(\bh).
\]
\end{lemma}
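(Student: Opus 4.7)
The plan is to first show $d_1 = d_2$ by an orthogonal decomposition, then compute $d_2^2$ by projecting $\zeta(\bh)$ onto the ray $R_{\zeta(\br)}$, splitting by the sign of the inner product (equivalently, the sign of $\rho_{\br,\bh}$).

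For the first equality, I would observe that $\br = \zeta(\br) + \mr\bo$, so
\[
U_\br^+ = \{a'\zeta(\br) + b''\bo : a' \in \bR,\ a'>0,\ b''\in\bR\}.
\]
Since $\zeta(\br)\in Z$ and every element of $Z$ is orthogonal to $\bo$, writing $\bh = \zeta(\bh) + \mh\bo$ is an orthogonal decomposition, and hence
\[
\|\bh - a'\zeta(\br) - b''\bo\|^2 = \|\zeta(\bh) - a'\zeta(\br)\|^2 + n(\mh - b'')^2.
\]
The second term can be zeroed by choosing $b'' = \mh$ regardless of $a'$, so $d_1^2 = \inf_{a'>0}\|\zeta(\bh) - a'\zeta(\br)\|^2$, which is precisely $d_2^2$ by definition of the ray $R_{\zeta(\br)}$.

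For the second equality, I would use standard projection onto a line. The unconstrained minimizer of $\|\zeta(\bh) - a'\zeta(\br)\|^2$ is $a'^* = \langle \zeta(\bh),\zeta(\br)\rangle/\|\zeta(\br)\|^2$, and by~\eqref{eqn:sigma_geometric} and~\eqref{eqn:rho_inner_product} its sign agrees with that of $\rho_{\br,\bh}$. When $\rho_{\br,\bh} > 0$ the minimizer lies on the ray, so
\[
d_2^2 = \|\zeta(\bh)\|^2 - \frac{\langle \zeta(\bh),\zeta(\br)\rangle^2}{\|\zeta(\br)\|^2} = \sigma_\bh^2 - \rho_{\br,\bh}^2 \sigma_\bh^2 = \sigma_\bh^2(1-\rho_{\br,\bh}^2).
\]
When $\rho_{\br,\bh} \leq 0$, the quadratic $a'\mapsto\|\zeta(\bh)-a'\zeta(\br)\|^2$ is increasing on $(0,\infty)$, so its infimum over the open ray equals its value at $a' = 0$, namely $\|\zeta(\bh)\|^2 = \sigma_\bh^2$. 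In both cases the value matches $\ell_\br(\bh)$.

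The only real subtlety is the half-open constraint $a'>0$: one must verify that in the $\rho_{\br,\bh}\leq 0$ branch the infimum is indeed $\sigma_\bh^2$ even though it is not attained on $R_{\zeta(\br)}$ (it is attained on the closure, at the origin). I expect this and the sign-matching between $a'^*$ and $\rho_{\br,\bh}$ to be the only points that need a careful word; the rest is a direct computation once the orthogonal decomposition $\bR^n = Z \oplus \bR\bo$ is in hand.
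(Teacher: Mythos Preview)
Your proposal is correct and follows essentially the same approach as the paper: first reduce $d_1$ to $d_2$ via the orthogonal splitting $\bR^n=Z\oplus\bR\bo$, then compute $d_2^2$ by projecting onto the line through $\zeta(\br)$ and splitting on the sign of $\rho_{\br,\bh}$. The only cosmetic difference is that the paper proves $d_1=d_2$ by two explicit inequalities (mapping points of $R_{\zeta(\br)}$ into $U^+_\br$ and vice versa), whereas you do it in one step with the Pythagorean identity; your remark about the infimum at $a'=0$ in the $\rho_{\br,\bh}\le 0$ case is, if anything, slightly more careful than the paper's treatment.
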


\begin{proof} We start by proving that $d_1=d_2$.  Let $\bu=a'\zeta(\br)=a'\br-a'\mr\bo\in R_{\zeta(\br)}$. Then
\[
\delta(\zeta(\bh),\bu)=\delta(\bh-\mh\bo,\bu)=\delta(\bh,\bu+\mh\bo),
\]
and $\bu+\mh\bo=a'\br+(\mh-\mr)\bo\in U^+_\br$. So $d_1\leq d_2$. 

Let $\bu=a'\br+b'\bo=a'\zeta(\br)+(b'+\mr)\bo\in U^+_\br$. Then
\[
\delta(\bh,\bu)=\delta(\zeta(\bh),\bu-\mh\bo)=\delta(\zeta(\bh),a'\zeta(\br)+(b'+\mr-\mh)\bo)\geq \delta(\zeta(\bh),a'\zeta(\br)),
\]
since $\zeta(\bh),a'\zeta(\br)\in Z$ and since $\bo$ is orthogonal to $Z$. Since $a'\zeta(\br)\in R_{\zeta(\br)}$, we see that $d_2\leq d_1$. Hence $d_1=d_2$.

We now prove that $d_2^2=\ell_{\br}(\bh)$. There are two cases, depending on whether or not the closest point $P$ to $\zeta(\bh)$ on the line generated by $\zeta(\br)$ lies in the ray $R_{\zeta(\br)}$: see Figure~\ref{fig:dist_to_ray}. The first case, when $P$ lies on the ray, happens if and only if $\langle\zeta(\bh),\zeta(\br)\rangle>0$. This happens exactly when $\rho_{\bh,\br}>0$, by~\eqref{eqn:rho_inner_product}. In this case,
\[
d_2^2=||\zeta(\bh)||^2\sin^2\theta=||\zeta(\bh)||^2(1-\cos^2\theta)=\sigma_{\bh}^2(1-\rho^2_{\br,\bh}),
\]
where $\theta$ is the angle between $\zeta(\bh)$ and $\zeta(\br)$, by~\eqref{eqn:sigma_geometric} and~\eqref{eqn:rho_geometric}. In the second case, when $\langle\zeta(\bh),\zeta(\br)\rangle\leq 0$, the distance between $\zeta(\bh)$ and the ray $R_{\zeta(\br)}$ is given by the distance from $\zeta(\bh)$ to the origin. So
\[
d_2^2=||\zeta(\bh)||^2=\sigma_{\bh}^2,
\]
by~\eqref{eqn:sigma_geometric}. This establishes the lemma.
\begin{figure}
\begin{center}
\includegraphics[width=100mm]{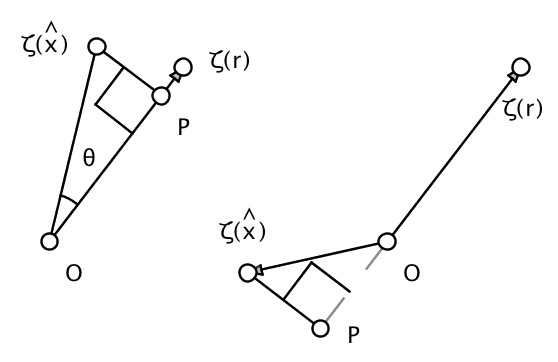}
\end{center}
\caption{The distance of a point to a ray: two cases}
\label{fig:dist_to_ray}
\end{figure}
\end{proof}

\begin{proof}[Proof of Theorem~\ref{thm:MLD}]
Since the components of $\bnu$ are picked independently according to a normal distribution with mean $0$ and standard deviation $\sigma$, each value of $\bnu$ is associated with the value of the corresponding normal Probability Density Function $f(\bnu)$, where
\[
f(\bnu)=\prod_{i=1}^n\frac{1}{\sigma\sqrt{2}\pi}\exp(-\nu_i^2/(2\sigma^2)).
\]

For vectors $\br$ and $\bh$, define
\[
L_{a,b}(\bh\mid\br)=f((\br-b\bo)/a-\bh).
\]
This is the likelihood of $\bh$ given $\br$ when $a$ and $b$ are fixed, since $\bnu=(\br-b\bo)/a-\bh$ in this case.

In maximum likelihood decoding, we decode a received vector $\br=a(\bx+\bnu)+b\bo$ as $\bh\in C$, where $\bh$ is the codeword that maximises
\begin{align*}
\max_{a,b\in \mathbb{R}, a>0} L_{a,b}(\bh\mid\br)&=
\max_{a,b\in \mathbb{R}, a>0} f((\br-b\bo)/a-\bh)\\
&= \max_{a',b'\in\mathbb{R},a'>0} f(a'\br+b'\bo-\bh),
\end{align*}
where $a'=1/a$ and $b'=b/a$. The logarithm function is strictly increasing on the positive real numbers, and $f$ is a positive function. So equivalently we want to find $\bh\in C$ that maximises $\max_{a',b'\in\mathbb{R},a'>0}\log f(a'\br+b'\bo-\bh)$. But
\[
\log f(a'\br+b'\bo-\bh)=-n\log(\sigma\sqrt{2}\pi)-\frac{1}{2\sigma^2}\sum_{i=1}^n(a'r_i+b'-\hat{x}_i)^2.
\]
Since $-n\log(\sigma\sqrt{2}\pi)$ is a constant (in other words, independent of $\bh$ and $\br$), and since $\frac{1}{2\sigma^2}$ is a positive constant, we see that a maximum likelihood decoder finds a codeword $\bh$ that \emph{minimises}
\[
\min_{a',b'\in\bR,a'>0}\sum_{i=1}^n(a'r_i+b'-\hat{x}_i)^2,
\]
which is the square of the Euclidean distance between $U_\br^+$ and $\bh$. But, 
by Lemma~\ref{lem:MLD}, this is exactly the same as minimising the function $\ell_\br(\bh)$, as required.
\end{proof}

We describe techniques to reduce the amount of computation the maximum likelihood decoder needs. Firstly, the value $\sigma_{\bh}^2$ can be precomputed for all codewords $\bh\in C$. Secondly, for codes such as $2$-constrained codes~\cite{ImminkWeber14} that are preserved under permuting their cooordinates, we can significantly reduce the number of codewords we need to consider by making the following observations. The value of $\sigma_{\bh}$ is not changed if we permute the coordinates of $\bh$, and the value of $\rho_{\br,\bh}$ is maximised when we permute the coordinates of $\bh$ to have the same order as the coordinates of $\br$. So we may use the `composition code' decomposition technique from~\cite[Section~IV.B]{ImminkWeber14} to decode more efficiently, only storing codewords that are in sorted order. Finally, we observe that $2$-constrained codes $C$ have the property that whenever $\bx\in C$ then its \emph{complement} $\by=(q-1)\bo-\bx$ also lies in $C$. We note that $\zeta(\bx)=-\zeta(\by)$ and so we find that $\sigma_\bx=\sigma_{\by}$ and $\rho_{\br,\bx}=-\rho_{\br,\by}$ for any non-constant received word $\br$. So for codes which are closed under taking complements, we only need to store one codeword from each pair $\{\bx,\by\}$. If we do this, we search for a codeword $\bh$ that minimises $\sigma_{\bh}^2(1-\rho_{\br,\bh}^2)$; we then decode to the complement of $\bh$ when $\rho_{\br,\bh}<0$ and decode to $\bh$ otherwise. This technique can be combined with the composition code technique above. The technique can also be used with the decoder in~\cite{ImminkWeber14}: here we find a codeword maximising $|\rho_{\br,\bh}|$, and decode to this codeword if $\rho_{\br,\bh}\geq 0$ or to its complement otherwise.

\section{The distance between codewords}
\label{sec:distance}

For codewords $\bu,\by\in C$, we define a (squared) distance measure $\delta'(\bu,\bv)$ by
\[
\delta'(\bu,\bv)=\begin{cases}
\sigma_\bu^2\sigma_\bv^2(1-\rho_{\bu,\bv}^2)/\sigma_{\bu+\bv}^2&\text{ when }
\rho_{\bu,\bv}>-\min\{\sigma_\bv/\sigma_{\bu},\sigma_\bu/\sigma_{\bv}\},\\
\min\{\sigma_\bu^2,\sigma_\bv^2\}&\text{ otherwise}.
\end{cases}
\]
Note that $\delta'(\bu,\bv)=\delta'(\bv,\bu)$. Also note that
$\delta'(\bu,\bv)$ depends only on $\zeta(\bu)$ and $\zeta(\bv)$, by~\eqref{eqn:sigma_geometric} and \eqref{eqn:rho_inner_product}. Finally, we claim that
\begin{equation}
\label{eqn:dist_bound}
\sigma_\bu^2\sigma_\bv^2(1-\rho_{\bu,\bv}^2)/\sigma_{\bu+\bv}^2\leq \min\{\su^2,\sv^2\}.
\end{equation}
To see this, we may verify by routine calculation that
\begin{align}
\sigma_\bu^2\sigma_\bv^2(1-\rho_{\bu,\bv}^2)/\sigma_{\bu+\bv}^2
&=\frac{\langle \zeta(\bu),\zeta(\bu)\rangle\langle \zeta(\bv),\zeta(\bv)\rangle-\langle \zeta(\bu),\zeta(\bv)\rangle^2}{\langle \zeta(\bu+\bv),\zeta(\bu+\bv)\rangle}\nonumber\\
\label{eqn:rearrange}
&=\langle \zeta(\bu),\zeta(\bu)\rangle-\frac{\langle \zeta(\bu),\zeta(\bu)+\zeta(\bv))\rangle^2}{\langle \zeta(\bu)+\zeta(\bv),\zeta(\bu)+\zeta(\bv)\rangle}\\
&\leq \langle \zeta(\bu),\zeta(\bu)\rangle=\su^2.\nonumber
\end{align}
A similar calculation shows that the left hand side of~\eqref{eqn:dist_bound} is at most $\sv^2$, and so the claim follows.

In this section, we will give a geometric interpretation for $\delta'(\bu,\bv)$, and we relate the minimum distance of a code (using this notion of distance) to the error rate of a maximum likelihood decoder.

We note that~\cite{ImminkWeber14} defines a different distance measure (namely the distance $d_2(\bu,\bv)=2\sigma_\bu^2(1-\rho_{\bu,\bv})$, which is not symmetrical in $\bu$ and $\bv$) to be used to calculate the minimum distance of a code in this context. This distance measure is natural for the decoder in~\cite{ImminkWeber14}; in Section~\ref{sec:comparison} we briefly compare this measure with the measure above. 

\begin{lemma}
\label{lem:ray}
Let $\bx,\by\in \bR^n$. Then $\delta'=\delta'(\bx,\by)$ is the largest real number $\delta'$ with the following property. Let $B(\bx,\delta')$ be the ball in $Z$ of radius $\sqrt{\delta'}$ and centre $\zeta(\bx)$ (using Euclidean distance). Let $B(\by,\delta')$ be the ball in $Z$ of radius $\sqrt{\delta'}$ and centre $\zeta(\by)$. Then there is no ray $R_{\zeta(\br)}$ that intersects the interior of both $B(\bx,\delta')$ and $B(\by,\delta')$.
\end{lemma}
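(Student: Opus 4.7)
My plan is to use Lemma~\ref{lem:MLD} to translate the statement into a min-max problem about $\ell_\br$, reduce it to the two-dimensional plane spanned by $\zeta(\bx)$ and $\zeta(\by)$, and then handle each branch of the formula defining $\delta'(\bx,\by)$ by an elementary angular argument.

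By Lemma~\ref{lem:MLD}, $\ell_\br(\bh)$ is the squared Euclidean distance from $\zeta(\bh)$ to the ray $R_{\zeta(\br)}$; consequently $R_{\zeta(\br)}$ meets the interior of $B(\bh,\delta')$ if and only if $\ell_\br(\bh)<\delta'$. The lemma is therefore equivalent to the identity
\[
\delta'(\bx,\by)=\inf_{\br}\max\{\ell_\br(\bx),\ell_\br(\by)\},
\]
where $\br$ ranges over vectors with $\zeta(\br)\neq\mathbf{0}$. Let $V=\mathrm{span}(\zeta(\bx),\zeta(\by))\subseteq Z$, and decompose $\zeta(\br)=\br_V+\br_\perp$ with $\br_V\in V$, $\br_\perp\perp V$. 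Since this leaves $\langle\zeta(\bh),\zeta(\br)\rangle$ unchanged for $\bh\in\{\bx,\by\}$ but can only decrease $\|\zeta(\br)\|$, the formula of Theorem~\ref{thm:MLD} gives $\ell_{\br_V}(\bh)\leq\ell_\br(\bh)$, so I may assume $\zeta(\br)\in V$. I then choose orthonormal coordinates on $V$ with $\zeta(\bx)=(\sigma_\bx,0)$ and $\zeta(\by)=(\sigma_\by\cos\theta,\sigma_\by\sin\theta)$ for $\theta=\arccos\rho_{\bx,\by}\in(0,\pi)$, and parametrise the ray direction by $(\cos\psi,\sin\psi)$; then $\ell_\br(\bx)=\sigma_\bx^2\sin^2\psi$ when $|\psi|<\pi/2$ and $\sigma_\bx^2$ otherwise, with the analogous formula for $\ell_\br(\by)$ in terms of $\theta-\psi$.

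In Case~1 ($\rho_{\bx,\by}>-\min\{\sigma_\bx/\sigma_\by,\sigma_\by/\sigma_\bx\}$) I take the test ray $\br=\bx+\by$. Under the Case~1 hypothesis both $\sigma_\bx+\sigma_\by\rho_{\bx,\by}>0$ and $\sigma_\by+\sigma_\bx\rho_{\bx,\by}>0$, so $\zeta(\br)=\zeta(\bx)+\zeta(\by)$ lies strictly in the open angular sector from $\zeta(\bx)$ to $\zeta(\by)$: the unsigned angles $A$ from $\zeta(\br)$ to $\zeta(\bx)$ and $B$ from $\zeta(\br)$ to $\zeta(\by)$ both lie in $(0,\pi/2)$ and satisfy $A+B=\theta$. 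Equation~\eqref{eqn:rearrange} (together with Lemma~\ref{lem:MLD}) evaluates $\ell_\br(\bx)=\ell_\br(\by)=\delta'(\bx,\by)$, so $\sin A=\sqrt{\delta'(\bx,\by)}/\sigma_\bx$ and $\sin B=\sqrt{\delta'(\bx,\by)}/\sigma_\by$. For the matching lower bound, suppose some $\br$ gave $\max\{\ell_\br(\bx),\ell_\br(\by)\}<\delta'(\bx,\by)$; by~\eqref{eqn:dist_bound} both values are then below $\min\{\sigma_\bx^2,\sigma_\by^2\}$, which forces $\rho_{\br,\bx},\rho_{\br,\by}>0$ and so the unsigned angles $\alpha=|\psi|$, $\beta=|\theta-\psi|$ both lie in $(0,\pi/2)$. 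The strict inequalities $\sigma_\bx^2\sin^2\alpha<\delta'(\bx,\by)$ and $\sigma_\by^2\sin^2\beta<\delta'(\bx,\by)$ yield $\sin\alpha<\sin A$ and $\sin\beta<\sin B$, hence $\alpha<A$ and $\beta<B$; but the triangle inequality gives $\alpha+\beta=|\psi|+|\theta-\psi|\geq|\psi+(\theta-\psi)|=\theta$, so $\theta\leq\alpha+\beta<A+B=\theta$, a contradiction.

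In Case~2 ($\rho_{\bx,\by}\leq-\min\{\sigma_\bx/\sigma_\by,\sigma_\by/\sigma_\bx\}$), assume WLOG $\sigma_\by\leq\sigma_\bx$, so $\rho_{\bx,\by}\leq-\sigma_\by/\sigma_\bx$. The test ray $\br=\bx$ gives $\ell_\br(\bx)=0$ and, since $\rho_{\br,\by}=\rho_{\bx,\by}\leq 0$, $\ell_\br(\by)=\sigma_\by^2=\delta'(\bx,\by)$. For the lower bound, if some $\br$ achieved $\max<\sigma_\by^2$ the same reasoning as above puts $\psi\in(\theta-\pi/2,\pi/2)$; but in this case $\theta>\pi/2$ and $\sin(\theta-\pi/2)=-\cos\theta=-\rho_{\bx,\by}\geq\sigma_\by/\sigma_\bx$, so $\sin\psi>\sin(\theta-\pi/2)\geq\sigma_\by/\sigma_\bx$ and hence $\ell_\br(\bx)=\sigma_\bx^2\sin^2\psi>\sigma_\by^2$, a contradiction. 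I expect the main obstacle to be Case~1: correctly identifying $\br=\bx+\by$ (rather than the spurious critical direction $\bx-\by$ that also solves the equal-distance condition for the line through the origin) as the minimax direction, and verifying that the two Case~1 positivity hypotheses are exactly what place $\zeta(\bx+\by)$ in the angular sector between $\zeta(\bx)$ and $\zeta(\by)$ so that $A+B=\theta$ holds geometrically.
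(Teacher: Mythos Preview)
Your proposal is correct. Both you and the paper identify the same critical rays ($\zeta(\bx+\by)$ in Case~1, $\zeta(\bx)$ in Case~2, up to swapping $\bx$ and $\by$), and both show these achieve the claimed value of $\delta'$; the two arguments diverge in how they establish the matching lower bound.

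The paper argues by \emph{separating hyperplanes}. In Case~1 it exhibits the hyperplane $H\subseteq Z$ spanned by $\zeta(\bx+\by)$ together with $(\mathrm{span}\{\zeta(\bx),\zeta(\by)\})^\perp$, computes that both $\zeta(\bx)$ and $\zeta(\by)$ lie at distance exactly $\sqrt{\delta'}$ from $H$ on opposite sides, and concludes that no ray can cross both open balls. In Case~2 it uses $H=\zeta(\bx)^\perp$, and for sharpness simply notes that any larger radius would put the origin in the interior of the smaller ball. Your approach instead recasts the problem as the minimax $\inf_{\br}\max\{\ell_\br(\bx),\ell_\br(\by)\}$ via Lemma~\ref{lem:MLD}, projects to the plane $V$, and argues by an \emph{angular triangle inequality}: if some ray beat $\delta'$, the angles $\alpha,\beta$ to $\zeta(\bx),\zeta(\by)$ would satisfy $\alpha<A$, $\beta<B$ yet $\alpha+\beta\geq\theta=A+B$. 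In two dimensions these are really the same geometric fact (the line along $\zeta(\bx+\by)$ \emph{is} the bisecting hyperplane), but your packaging avoids ever naming $H$ and makes the equality case $A+B=\theta$ do the work. Your Case~2 lower bound is a genuine variant: rather than invoking the origin-in-ball observation, you squeeze $\psi$ into $(\theta-\pi/2,\pi/2)$ and use $\sin(\theta-\pi/2)=-\rho_{\bx,\by}\geq\sigma_\by/\sigma_\bx$ to force $\ell_\br(\bx)>\sigma_\by^2$, which is a neat direct computation.

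Two small points worth tightening when you write it out. First, in the reduction to $V$ you should note the edge case $\br_V=0$: then $\rho_{\br,\bx}=\rho_{\br,\by}=0$, so $\max\{\ell_\br(\bx),\ell_\br(\by)\}=\max\{\sigma_\bx^2,\sigma_\by^2\}\geq\delta'$ by~\eqref{eqn:dist_bound}, and such $\br$ can be discarded. Second, your identification $\beta=|\theta-\psi|$ tacitly uses that $|\psi|<\pi/2$ and $\beta<\pi/2$ force $\theta-\psi\in(-\pi/2,\pi/2)$ (otherwise the unsigned angle would exceed $\pi/2$); this is true but deserves a sentence.
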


\begin{figure}
\begin{center}
\includegraphics[width=80mm]{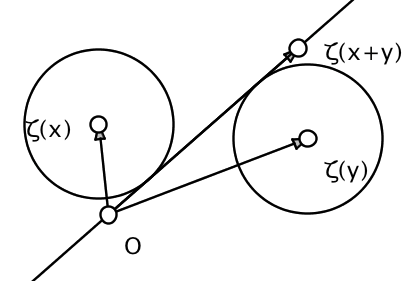}
\end{center}
\caption{A hyperplane in $Z$ at equal distance from $\zeta(\bx)$ and $\zeta(\by)$}
\label{fig:generic_min_radius}
\end{figure}

\begin{proof}
Firstly, suppose that $\rho_{\bx,\by}>-\min\{\sigma_\by/\sigma_{\bx},\sigma_\bx/\sigma_{\by}\}$. In particular this means that $\zeta(\bx)\not=-\zeta(\by)$, and so $\zeta(\bx+\by)$ is a non-zero vector.

The typical situation in this case is drawn in Figure~\ref{fig:generic_min_radius}.
Let $P$ be the subplane of $Z$ generated by $\zeta(\bx)$ and $\zeta(\by)$, and let $K$ be the subplane of vectors orthogonal to $P$. Let $H$ be the hyperplane in $Z$ generated by $\zeta(\bx+\by)$ and $K$. We have that $\zeta(\bx)$ and $\zeta(\by)$ lie on different sides of $H$. The closest point in $H$ to $\zeta(\bx)$ lies in $P$, and so lies on the line generated by $\zeta(\bx)+\zeta(\by)$; the same is true for the closest point to $\zeta(\by)$. Setting $\theta$ to be the angle between  $\zeta(\bx)$ and $\zeta(\bx)+\zeta(\by)$, we find that the squared distance between $H$ and $\zeta(\bx)$ is
\begin{align*}
||\zeta(\bx)||^2\sin^2\theta&=||\zeta(\bx)||^2-||\zeta(\bx)||^2\cos^2\theta\\
&=\langle\zeta(\bx),\zeta(\bx)\rangle-\frac{\langle\zeta(\bx),\zeta(\bx)+\zeta(\by)\rangle^2}{\langle\zeta(\bx)+\zeta(\by),\zeta(\bx)+\zeta(\by)}\\
&=\sigma_\bx^2\sigma_\by^2(1-\rho^2_{\bx,\by})/\sigma^2_{\bx+\by} \text{ by~\eqref{eqn:rearrange}}\\
&=\delta'.
\end{align*}
So the interior of $B(\bx,\delta')$ does not intersect $H$. Similarly, $\zeta(\by)$ is also at distance $\delta'$ from $H$ and so the interior of $B(\by,\delta')$ does not intersect $H$. So the interiors of $B(\bx,\delta')$  and $B(\by,\delta')$ lie on different sides of a hyperplane, and therefore no ray from the origin intersects them both, as required.

We now show that the value for $\delta'$ is optimal, by proving that the ray $R_{\zeta(\bx+\by)}$ touches the boundaries of both $B(\bx,\delta')$ and $B(\by,\delta')$. The nearest point to $\zeta(\bx)$ on the line generated by $\zeta(\bx+\by)$ is given by
\[
\frac{||\zeta(\bx)||\cos\theta}{||\zeta(\bx+\by)||}\zeta(\bx+\by)=\frac{\langle\zeta(\bx),\zeta(\bx+\by)\rangle}{||\zeta(\bx+\by)||^2}\zeta(\bx+\by).
\]
So $R_{\zeta(\bx+\by)}$ touches $B(\bx,\delta')$ if and only if $\langle\zeta(\bx),\zeta(\bx+\by)\rangle>0$. But
\begin{align*}
\langle \zeta(\bx),\zeta(\bx+\by)\rangle&=\langle \zeta(\bx),\zeta(\bx)\rangle+\langle \zeta(\bx),\zeta(\by)\rangle\\
&=\sigma_\bx^2+\sigma_{\bx}\sigma_{\by}\rho_{\bx,\by}\text{ by~\eqref{eqn:sigma_geometric} and~\eqref{eqn:rho_inner_product}}\\
&>\sigma_\bx^2-\sigma_{\bx}\sigma_{\by}(\sigma_{\bx}/\sigma_{\by})\\
&=0.
\end{align*}
So $R_{\zeta(\bx+\by)}$ touches $B(\bx,\delta')$. The argument that $R_{\zeta(\bx+\by)}$ touches $B(\by,\delta')$ is similar, and uses the fact that $\rho_{\bx,\by}>-\sigma_{\by}/\sigma_{\bx}$. This shows that our value for $\delta'$ is optimal in this case.

We now turn to the case when $\rho_{\bx,\by}\leq-\min\{\sigma_\by/\sigma_{\bx},\sigma_\bx/\sigma_{\by}\}$. See Figure~\ref{fig:bigger_min_radius} for a typical situation. Without loss of generality, assume that $\sigma_\bx\leq \sigma_\by$. So $\delta'=\sigma_\bx^2$ and $\rho_{\bx,\by}\leq-\sigma_\bx/\sigma_\by$.

\begin{figure}
\begin{center}
\includegraphics[width=80mm]{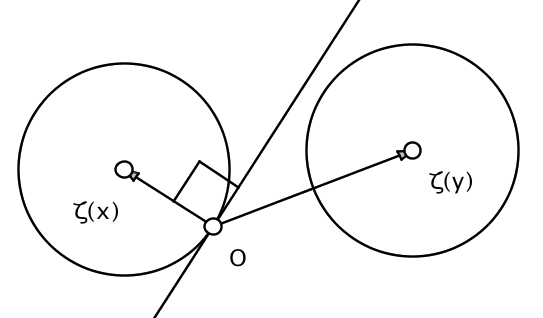}
\end{center}
\caption{A typical case when $||\zeta(\bx)||\leq||\zeta(\by)||$}
\label{fig:bigger_min_radius}
\end{figure}

Let $H=\zeta(\bx)^{\perp}$, so
\[
H=\{\bu\in Z:\langle\zeta(\bx),\bu\rangle=0\}
\]
is the hyperplane in $Z$ of all vectors that are orthogonal to $\bx$. Clearly the nearest point on $H$ to $\zeta(\bx)$ is the origin, so $\zeta(\bx)$ is at distance $||\zeta(\bx)||=\sigma_{\bx}$ from $H$. Moreover, all points $\bu$ in the interior of $B(\bx,\delta')$ have $\langle\zeta(\bx),\bu\rangle>0$. Now let $\bu$ be a point in the interior of $B(\by,\delta')$. Then $\bu=\zeta(\by)+\bv$, where $||\bv||<\sqrt{\delta'}$ and so
\begin{align*}
\langle \zeta(\bx),\bu\rangle&=\langle \zeta(\bx),\zeta(\by)\rangle+\langle \zeta(\bx),\bv\rangle\\
&<\langle \zeta(\bx),\zeta(\by)\rangle+||\zeta(\bx)||\sqrt{\delta'}\\
&\leq||\zeta(\bx)||\,||\zeta(\by)||\rho_{\bx,\by}+||\zeta(\bx)||\sx\\
&\leq \sx\sigma_{\by}(-\sigma_{\bx}/\sigma_{\by})+\sx^2\\
&=0.
\end{align*}
Thus all points in the interior of $B(\by,\delta')$ lie on the opposite side of the hyperplane $H$ to the points in the interior of $B(\bx,\delta')$. So no ray from the origin can pass through both  $B(\bx,\delta')$ and  $B(\by,\delta')$, as required. Finally, it is easy to see that no larger value of $\delta'$ can have this property, for when $\delta'>\sigma_\bx^2$ we find that the origin is in the interior of $B(\bx,\delta')$, and so all rays from the origin (including, for example, $R_{\zeta(y)}$) pass through $B(\bx,\delta')$.
\end{proof}

\begin{theorem}
\label{thm:decoder_probability}
Let $C\subseteq \bR^n$ be a finite set of non-constant codewords. Define the minimum distance $\delta'$ of $C$ by
\[
\delta'=\min_{\bx,\by\in C,\bx\not=\by}\delta'(\bx,\by).
\]
The word error probability of a maximum likelihood decoder is bounded above by the probability that $\chi^2(n-1)\geq \delta'/\sigma^2$, where $\chi^2(n-1)$ is the chi-squared distribution with $n-1$ degrees of freedom.
\end{theorem}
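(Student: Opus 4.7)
The plan is to condition on a transmitted codeword $\bx\in C$; since the word error probability is a convex combination of the conditional error probabilities over $\bx\in C$, it suffices to bound the conditional probability by $\Pr(\chi^2(n-1)\geq \delta'/\sigma^2)$ uniformly in $\bx$ (and in the nuisance parameters $a,b$). The key first step is to convert ``decoder error'' into the geometric inequality $\ell_\br(\bx)\geq \delta'$. Suppose the maximum likelihood decoder returns some $\bh\neq \bx$, so that $\ell_\br(\bh)\leq \ell_\br(\bx)$. By Lemma~\ref{lem:MLD}, $\sqrt{\ell_\br(\bx)}$ and $\sqrt{\ell_\br(\bh)}$ are the Euclidean distances in $Z$ from $\zeta(\bx)$ and $\zeta(\bh)$ to the ray $R_{\zeta(\br)}$. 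If we had $\ell_\br(\bx)<\delta'(\bx,\bh)$, then both of these distances would be strictly less than $\sqrt{\delta'(\bx,\bh)}$, so the ray $R_{\zeta(\br)}$ would meet the interior of $B(\bx,\delta'(\bx,\bh))$ and (at possibly a different point) the interior of $B(\bh,\delta'(\bx,\bh))$, contradicting Lemma~\ref{lem:ray}. Hence any error forces $\ell_\br(\bx)\geq \delta'(\bx,\bh)\geq \delta'$.

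The second step is to upper-bound $\ell_\br(\bx)$ in terms of the noise. From $\br=a(\bx+\bnu)+b\bo$, using $\zeta(\bo)=0$ and linearity of $\zeta$, we have $\zeta(\br)=a(\zeta(\bx)+\zeta(\bnu))$. Since $a>0$, the point $(1/a)\zeta(\br)=\zeta(\bx)+\zeta(\bnu)$ lies on $R_{\zeta(\br)}$, so by Lemma~\ref{lem:MLD}
\[
\ell_\br(\bx)\leq ||\zeta(\bx)-(\zeta(\bx)+\zeta(\bnu))||^2=||\zeta(\bnu)||^2.
\]

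Finally, $\zeta$ is the orthogonal projection from $\bR^n$ onto the $(n-1)$-dimensional subspace $Z$, and $\bnu$ has iid $N(0,\sigma^2)$ entries; expanding $\zeta(\bnu)$ in any orthonormal basis of $Z$ gives $(n-1)$ iid $N(0,\sigma^2)$ coordinates, so $||\zeta(\bnu)||^2/\sigma^2$ has the $\chi^2(n-1)$ distribution. Combining the two steps,
\[
\Pr(\text{error}\mid \bx)\leq \Pr(\ell_\br(\bx)\geq \delta')\leq \Pr(||\zeta(\bnu)||^2\geq \delta')=\Pr(\chi^2(n-1)\geq \delta'/\sigma^2),
\]
which is the claimed bound. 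The main obstacle I expect is the first step: all of the geometric work lies in showing that a decoder error forces $\ell_\br(\bx)\geq \delta'$, which repackages Lemma~\ref{lem:ray}. The remaining two steps are essentially immediate, consisting of the comparison of $\zeta(\bx)$ against the explicit ray point $(1/a)\zeta(\br)$ and the standard identification of the squared norm of a projected isotropic Gaussian with a chi-squared distribution.
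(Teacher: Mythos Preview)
Your proposal is correct and follows essentially the same route as the paper: you use Lemma~\ref{lem:MLD} to interpret $\ell_\br(\cdot)$ as a squared distance to the ray $R_{\zeta(\br)}$, invoke Lemma~\ref{lem:ray} to force $\ell_\br(\bx)\geq\delta'$ on the error event, bound $\ell_\br(\bx)$ by $||\zeta(\bnu)||^2$ via the explicit ray point $(1/a)\zeta(\br)$, and identify $||\zeta(\bnu)||^2/\sigma^2$ as $\chi^2(n-1)$. The only difference is cosmetic: the paper argues the contrapositive (assume $||\zeta(\bnu)||^2<\delta'$ and deduce correct decoding), whereas you argue directly (assume an error and deduce $||\zeta(\bnu)||^2\geq\delta'$).
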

\begin{proof}
When a codeword $\bx$ is transmitted, the decoder receives a vector $\br=a(\bx+\bnu)+b$ where $a$ and $b$ are unknown, and the components of $\bnu$ are normally distributed with mean $0$ and standard deviation $\sigma$. Let $\be_1,\be_2,\ldots,\be_n$ be an orthonormal basis for $\bR^n$, with  $\be_1,\be_2,\ldots,\be_{n-1}$ spanning $Z$. We may write $\bnu=\bnu'+c\be_n$ where
\[
\bnu'=\nu'_1\be_1+\nu'_2\be_2+\cdots+\nu'_{n-1}\be_{n-1}
\]
and where the real numbers $\nu'_i$ and $c$ are independent and normally distributed with mean $0$ and standard deviation $\sigma$. Now $||\bnu'||^2/\sigma^2$ is a chi-squared random variable with $n-1$ degrees of freedom, so $||\bnu'||^2<\delta'$ with probability equal to the probability that $\chi^2(n-1)\geq \delta'/\sigma^2$. Assume that $||\bnu'||^2<\delta'$. It suffices to show that our maximum likelihood decoder returns the codeword $\bx$.

Note that $\zeta(\bnu)=\bnu'$, and so $\zeta(\br)=a(\zeta(\bx)+\bnu')$. The ray $R_{\zeta(\br)}$ passes within a squared distance of $||\bnu'||^2$ from $\zeta(\bx)$, since $\zeta(\bx)+\bnu'$ lies on this ray. So $\ell_\bx(\br)\leq ||\bnu'||^2<\delta'$. Let $\bh\in C$ be such that $\bh\not=\bx$. Lemma~\ref{lem:ray} and the definition of $\delta'$ shows that $R_{\zeta(\br)}$ cannot intersect the interior of a ball in $Z$ of radius $\delta'$ centred at $\zeta(\bh)$. Hence $\ell_\bh(\br)\geq \delta'$. Thus a maximum likelihood decoder will correctly decode to $\bx$.
\end{proof}

\section{Comparing the two decoders}
\label{sec:comparison}

Simulations indicate that the decoder in~\cite{ImminkWeber14} has a comparable performance with the maximum likelihood decoder when word error rate is considered. Figure~\ref{fig:WER4} shows the results of a simulation for the maximum likelihood decoder when $q=2$ and $n=4$ for a range of noise levels when a $1$-constrained code~\cite{ImminkWeber14} is used: the horizontal axis is the signal to noise ratio, defined as $-20\log_{10} \sigma$, and the vertical axis is the word error rate. Each point was the result of 10,000 trials with $a=1.07$ and $b=0.07$. Figure~\ref{fig:WER12} gives a similar situation when $n=12$. In each figure, simulation results are plotted along with the error rate predicted by averaging the bound in Theorem~\ref{thm:decoder_probability} over all subcodes of size $2$. These parameters are chosen for direct comparison with Figure~5 in~\cite{ImminkWeber14}.

\begin{figure}
\begin{center}
\includegraphics[width=80mm]{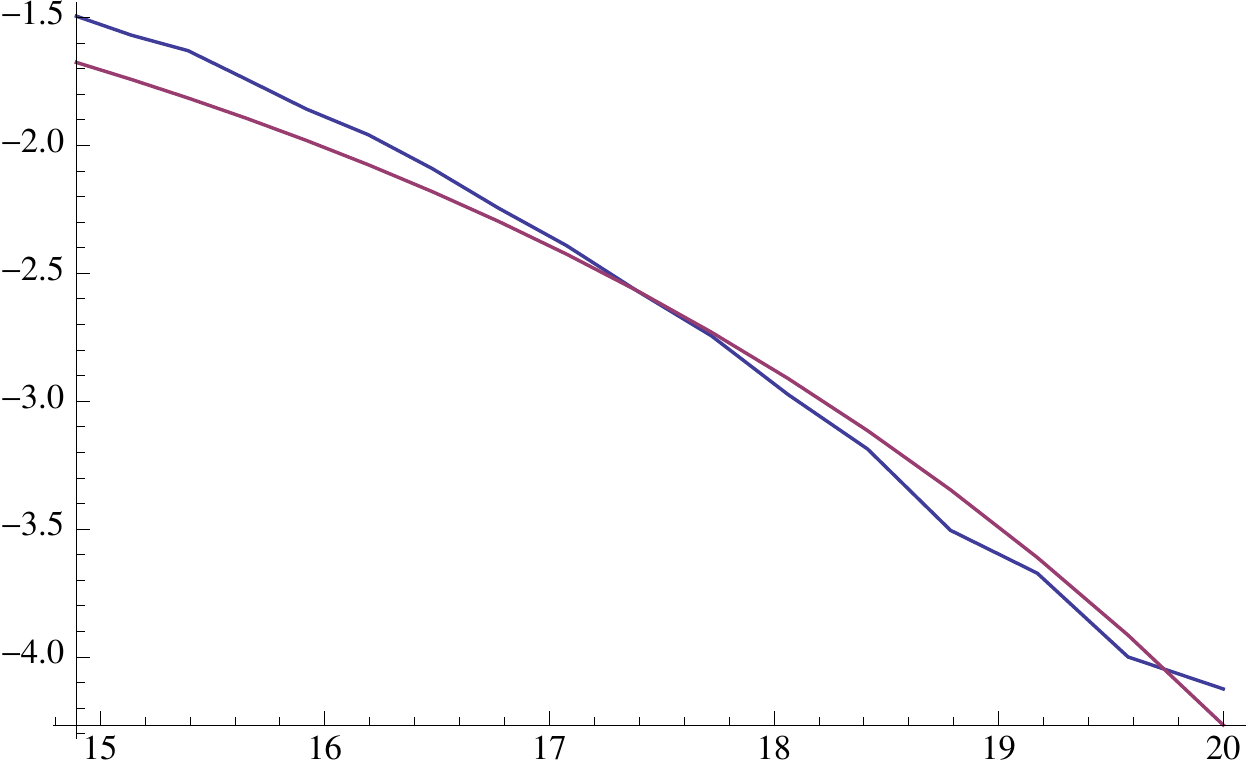}
\end{center}
\caption{Word error rate when $q=2$ and $n=4$}
\label{fig:WER4}
\end{figure}

\begin{figure}
\begin{center}
\includegraphics[width=80mm]{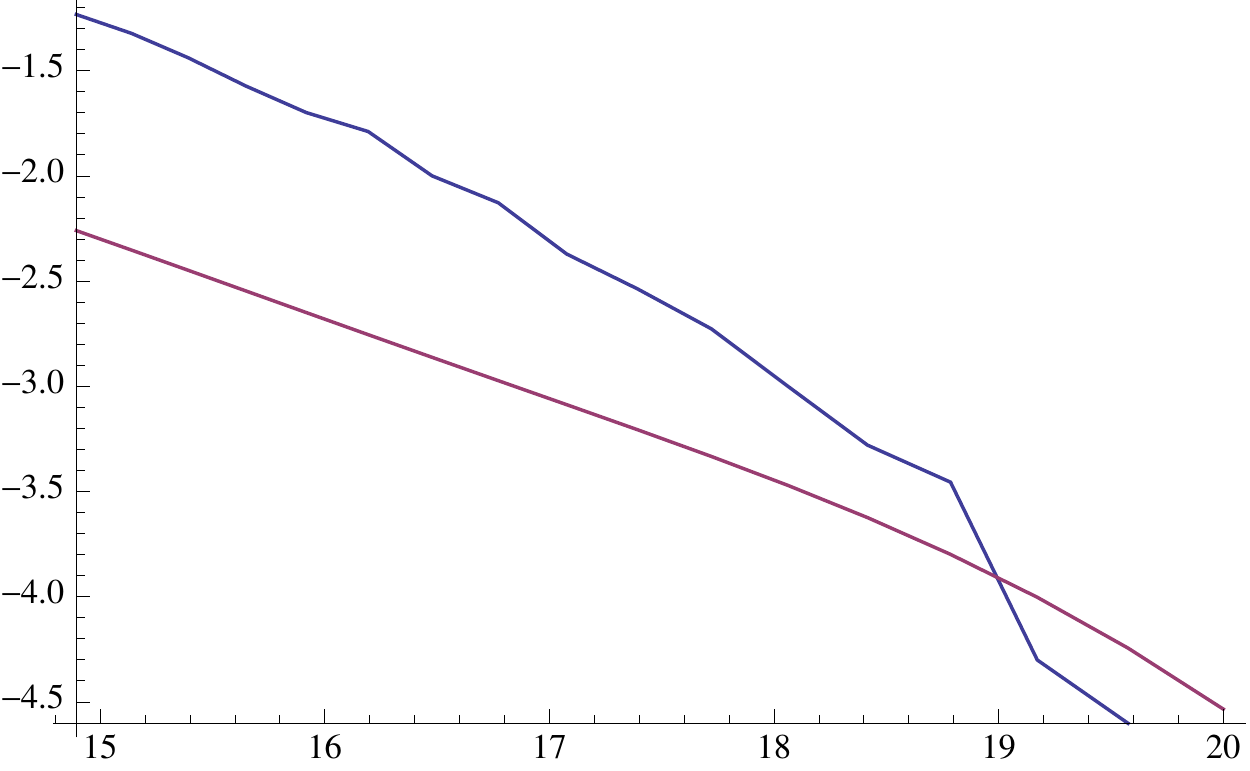}
\end{center}
\caption{Word error rate when $q=2$ and $n=12$}
\label{fig:WER12}
\end{figure}

Figure~\ref{fig:distance_comparison} is a scatter plot of two notions of distance for 10,000 random vectors when $q=100$ and $n=20$: the distance $\delta'(\bu,\bv)$ defined in Section~\ref{sec:distance} and the distance $d_2(\bu,\bv)$ defined in Section~IV.B of~\cite{ImminkWeber14} for the purposes of estimating word error rates. The figure shows a close to linear relationship between these two quantities for random vectors. Figure~\ref{fig:likelihood_comparison} is a scatter plot of two likelihood functions (namely Pearson distance and the likelihood function $\ell_\bx(\by)$ used by the maximum likelihood decoder) for a similarly randomly generated collection of vectors. Again, a close to linear relationship can be observed, which provides an explanation for the similar performance of the corresponding decoders.

\begin{figure}
\begin{center}
\includegraphics[width=80mm]{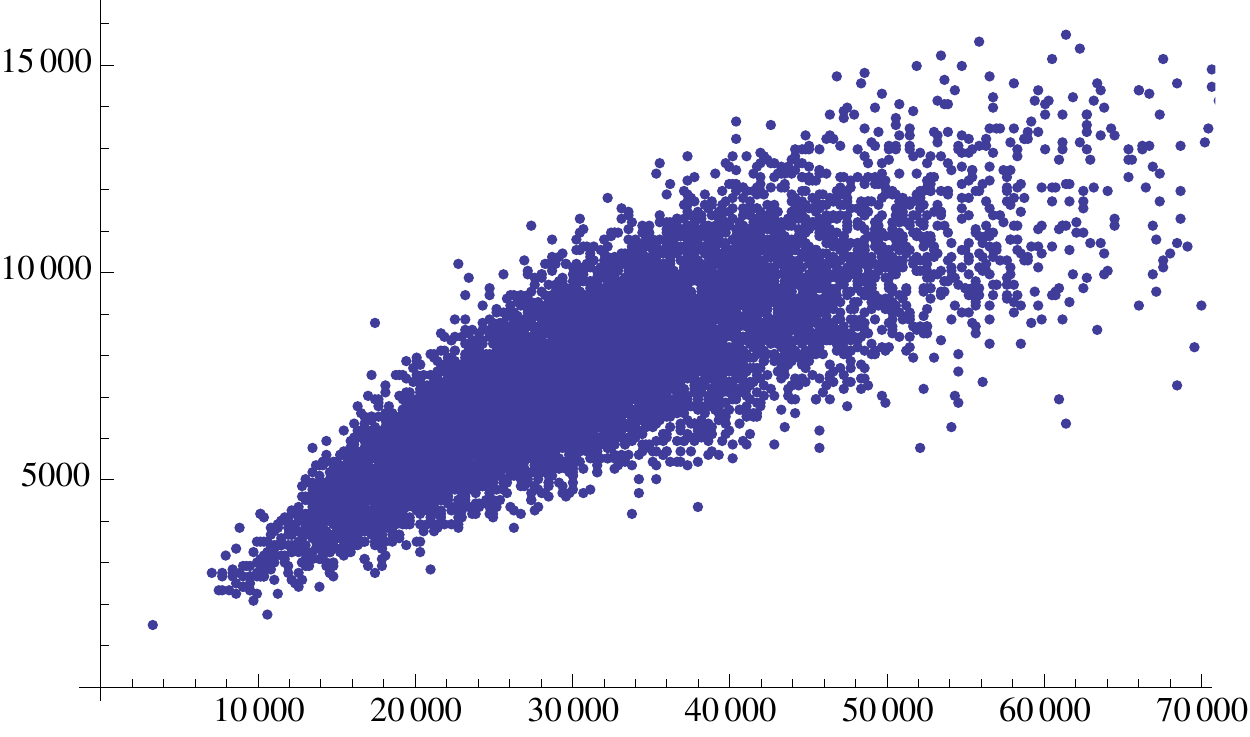}
\end{center}
\caption{Two distances: $d_2$ against $\delta'$}
\label{fig:distance_comparison}
\end{figure}

\begin{figure}
\begin{center}
\includegraphics[width=80mm]{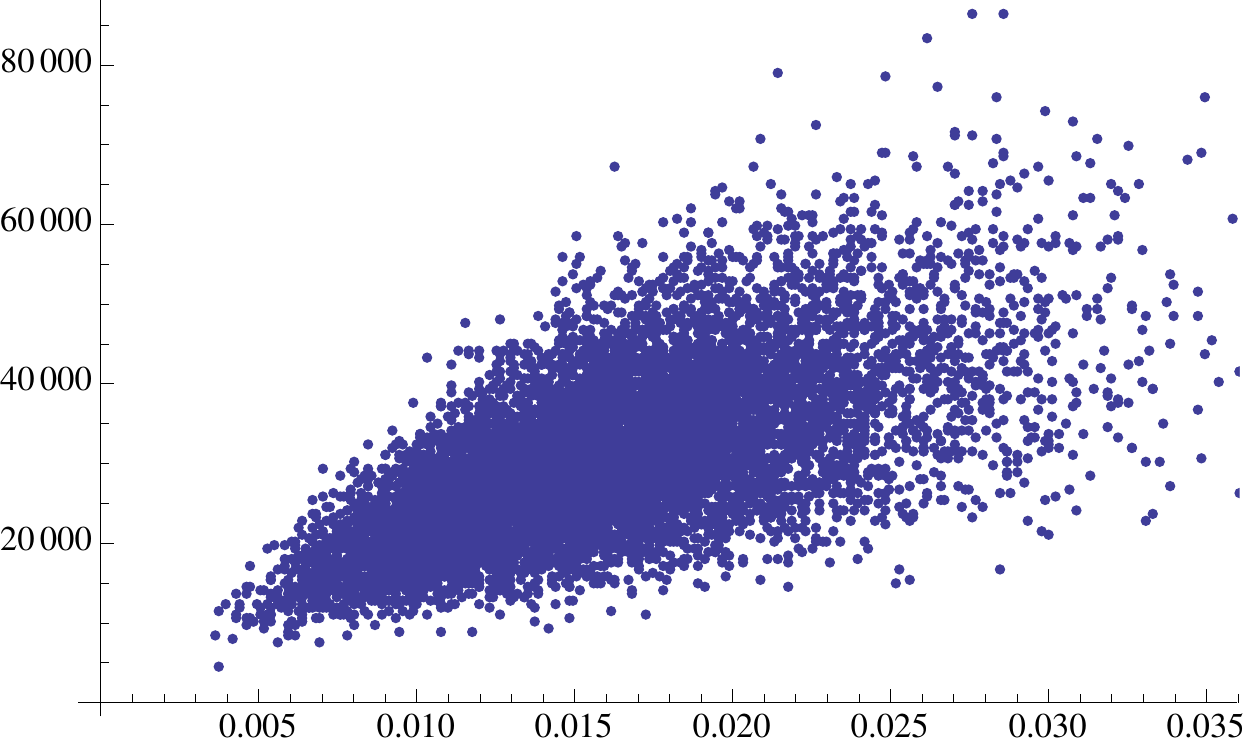}
\end{center}
\caption{Two likelihood functions: Pearson distance against $\ell_\bx(\by)$}
\label{fig:likelihood_comparison}
\end{figure}

\section{Comments}
\label{sec:comments}

\subsection{A geometric meaning for Pearson distance} Since the offset $b$ is arbitrary and unknown, and changes the mean of a vector by $b$, it seems sensible to normalise codewords and received words to have mean $0$. In other words, we consider the words $\zeta(\bh)=\bh-\overline{\bh}\bo$ and $\zeta(\br)=\br-\overline{\br}\bo$ rather than $\bh$ and $\br$. Scaling a vector of mean $0$ by $a$ does not change the mean, but scales the standard deviation by a factor of $a$. So it seems sensible to scale our normalised vectors so that they have standard deviation $1$: if our original vectors were non-constant, we can always find a scaling factor $a$ that does this. The resulting vectors, $(\bh-\overline{\bh}\bo)/\sh$ and $(\br-\overline{\br}\bo)/\sr$, lie on an $n-1$-dimensional unit sphere, centred at the origin. A natural distance measure between two vectors $\bu$ and $\bv$ on this unit sphere is their squared Euclidean distance, and it is not difficult to show that this is exactly $\delta_{\mathrm{Pearson}}(\bu,\bv)$. 

\subsection{Why are constant codewords forbidden?} The (unnormalised) standard deviation of a constant codeword is $0$, so the Pearson correlation coefficient $\rho_{\br,\bh}$ is not defined when $\bh$ is constant. But the forbidding of constant codewords is an artifact of the channel itself, not just the distance measure that is proposed for decoding. To see this, suppose that $\bh=\alpha \bo$ is a codeword. Let $\br$ be a received word, and define $\bs=\br-\bh$. For any positive $\epsilon\in \mathbb{R}$ we have
\begin{align*}
\epsilon^{-1}(\bh+\epsilon\bs)+(1-\epsilon^{-1})\alpha\bo&=
\bs+\alpha\bo=\br.
\end{align*}
Setting $a=\epsilon^{-1}$, $b=(1-\epsilon^{-1})\alpha$ and $\bnu=\epsilon\bs$ we have that $\br=a(\bh+\bnu)+b\bo$. But $\epsilon$ may be taken to be arbitrarily small, and so we see $\br$ could have been received when $\bh$ was transmitted, with an abitrarily small error vector $\bnu=\epsilon\bs$. So any reasonable decoder for this channel would decode \emph{every} received vector to $\bh$, and a sensible distance measure would set the distance between $\bh$ and any other vector as $0$.

\subsection{Future work} Weber, Immink and Blackburn~\cite{ImminkWeberPearson} have studied optimal Pearson codes, which are the largest codes contained in $\{0,q-1\}^n$ that can be correctly decoded in the zero-error case (when $\sigma=0$, and so $\nu=\mathbf{0}$). It would be very interesting to fully explore the interplay between the error correcting capacity of codes when $\sigma>0$ and the rate of an optimal code. We hope that the distance between codewords that is defined in Section~\ref{sec:distance} will provide a tool to accomplish this. 

\paragraph{Acknowledgements} The author would like to thank Alexey Koloydenko for fruitful discussions on maximum likelihood estimation, and Kees S. Immink and Jos Weber for commenting on an earlier draft of the manuscript. The author would also like to acknowledge the help of two software packages that were used to conduct experiments and simulations: Compass and Ruler~\cite{Grothmann}, and Mathematica~\cite{Wolfram}.


\begin{thebibliography}{99}
\bibitem{BezCamerlenghi03} Roberto Bez, Emilio Camerlenghi, Alberto Modelli and Angelo Visconti, `Introduction to flash memory', \emph{Proc. IEEE} \textbf{91} (2003), 489--502.
\bibitem{Grothmann} Ren\'{e} Grothmann, Compass and Ruler dynamic geometry programme, Version 12.0, \url{http://car.rene-grothmann.de/doc_en/}.  
\bibitem{ImminkWeber14} Kees A. Schouhamer Immink and Jos H. Weber, `Minimum Pearson distance detection for multilevel channels with gain and/or offset mismatch', \emph{IEEE Trans. Information Theory} \textbf{60} (2014), 5966--5974.
\bibitem{ImminkWeber15} Kees A. Schouhamer Immink and Jos H. Weber, `Hybrid Minimum Pearson and Euclidean Distance Detection', \emph{IEEE Trans. Communications}, to appear.
\bibitem{ImminkWeberPearson} Jos H. Weber, Kees A. Schouhamer Immink and Simon R. Blackburn, `Pearson codes', preprint.
\bibitem{JiangMateescu09} Anxiao (Andrew) Jiang, Robert Mateescu, Moshe Schwartz and Jehoshua Bruck, `Rank modulation for flash memories', \emph{IEEE Trans. Information Theory} \textbf{55} (2009), 2659--2673.
\bibitem{JiangSchwartz08} Anxiao (Andrew) Jiang, Moshe Schwartz and Jehoshua Bruck, `Error-correcting codes for rank modulation', in \emph{Proc. IEEE Int. Symposium on Information Theory (ISIT)} (IEEE, 2008), 1736--1740.
\bibitem{JiangSchwartz10} Anxiao (Andrew) Jiang, Moshe Schwartz and Jehoshua Bruck, `Error-correcting codes for rank modulation', \emph{IEEE Trans. Information Theory} \textbf{56} (2010), 2112--2120.
\bibitem{SalaImminkDolecek15} Frederic Sala, Kees A. Schouhamer Immink,
and Lara Dolecek, `Error control schemes for modern flash memories', \emph{IEEE Consumer Electronics Magazine}, January 2015, 66--73.
\bibitem{Wolfram} Wolfram Research, Inc., Mathematica, Version 9.0, Champaign, IL (2012).
\end{thebibliography}
\end{document}